\pdfoutput=1
\documentclass[10pt,twocolumn]{article}
\usepackage[margin=0.85in]{geometry}
\usepackage{lmodern}
\usepackage[T1]{fontenc}
\usepackage{amsmath,amssymb,amsthm}
\usepackage{graphicx}
\usepackage{booktabs}
\usepackage{array}
\usepackage{microtype}
\usepackage[hidelinks]{hyperref}
\usepackage{xcolor}
\usepackage{float}
\usepackage{caption}
\newcounter{algctr}
\newenvironment{algo}[1]
 {\par\medskip\noindent\refstepcounter{algctr}%
  \begin{minipage}{\linewidth}\hrule\smallskip
  {\small\textbf{Algorithm \thealgctr.} #1}\smallskip\hrule\smallskip
  \small\ttfamily\obeylines\parindent=0pt\leftskip=1em}
 {\par\smallskip\hrule\medskip\end{minipage}\par\medskip}
\newcommand{\algcap}[1]{} 

\newtheorem{definition}{Definition}
\newtheorem{theorem}{Theorem}
\newtheorem{assumption}{Assumption}

\newcommand{\eco}{\textsc{eco}}
\newcommand{\cpo}{\textsc{cpo}}
\newcommand{\Commit}{\mathsf{Commit}}
\newcommand{\Sign}{\mathsf{Sign}}
\newcommand{\Hash}{\mathsf{H}}

\title{\textbf{ECO/CPO-DAG: A Contradiction-Based Accountability Layer\\for Adversarial Supply Chains}\\[2pt]
\large A Domain-Specific Cryptoeconomic Audit Protocol}

\author{Sebastian Cochinescu\\
\normalsize University of Bucharest\\
\normalsize \texttt{sebastian.cochinescu@drd.unibuc.ro}}
\date{}

\begin{document}
\maketitle

\begin{abstract}
\noindent
We present ECO/CPO-DAG, a domain-specific accountability protocol for
adversarial supply chains that formalizes \emph{contradiction detection}
as a supplemental validation layer rather than a consensus or
truth-establishing mechanism. Participants publish signed
\emph{Event Claim Objects} (ECOs) into a causally ordered, append-only
directed acyclic graph (DAG) whose edges encode happened-before
relations. When two claims about the same subject violate a
domain constraint, any observer can compile a \emph{Contradiction Proof
Object} (CPO)---a self-verifying object binding the two signed claims and
the violated rule---which, on public verification, triggers economic
slashing of a determinately blamed party. We map constraints to GS1
EPCIS~2.0 event semantics (spatial uniqueness, temporal monotonicity,
quantity conservation, quality monotonicity, regulatory validity),
so detection targets inconsistencies that are meaningful in practice.
Selective disclosure via commitment schemes and, optionally,
zero-knowledge contradiction proofs lets parties withhold claim
contents until a challenge forces the minimal opening.
We give an analytical treatment: an independent-observer detection
model $1-(1-p_{\min})^{h}$ (temporally, $1-e^{-h\lambda t}$), a
deterrence condition $S > g(1-p)/(kp)$ under $k$-party collusion, and a
storage estimate of order $1$~GB per participant per year under stated
assumptions. We are explicit about the protocol's boundary:
it detects \emph{provable} contradictions, not consistent lies; a party
that never contradicts itself is invisible to it, so the layer complements
---and does not replace---source verification and oracle aggregation.
A single-machine reference implementation corroborates the detection
model---the predicted coverage band overlaps the measured 95\% confidence
interval at every observer count---and records zero false accusations across
the runs; the fully zero-knowledge CPO, multi-party propagation, and
adaptive-adversary evasion remain analytical.
\end{abstract}

\section{Introduction}

Supply chains are the canonical setting in which parties who do not trust
one another must nonetheless rely on one another's records. A pharmaceutical
distributor must trust a carrier's cold-chain log; an automotive OEM must
trust a supplier's provenance attestation. The economic incentive to
misreport is real and well documented, and blockchain-based traceability
has been proposed repeatedly as a remedy~\cite{saberi2019blockchain}. Yet
recording a claim on a tamper-evident ledger~\cite{nakamoto2008bitcoin} does not make the claim
\emph{true}: the ``oracle problem''---that on-chain data is only as
trustworthy as the off-chain process that produced it---remains
unsolved~\cite{zhang2016towncrier,zhang2020deco}. A ledger can guarantee
that a party \emph{said} something and cannot later deny it; it cannot
guarantee that what the party said corresponds to physical reality.

This paper takes a deliberately narrow position. We do not attempt to
verify claims against ground truth. Instead we ask a weaker but tractable
question:

\begin{quote}\itshape
How can we economically disincentivize \emph{provable contradictions} in
semi-trusted supply chains through a domain-specific accountability layer?
\end{quote}

A provable contradiction is a pair of signed claims by identifiable parties
that cannot both hold under a domain rule---for example, a lot asserted to
originate in two mutually exclusive regions, or a sealed quantity that grows
between two custody events. Such contradictions are \emph{self-evident}: a
verifier needs only the two signed claims and the rule, not access to the
physical good. This is the accountability primitive that PeerReview
established for distributed systems~\cite{haeberlen2007peerreview} and that
the fault-detection literature formalized as detecting deviations from a
reference specification~\cite{haeberlen2009faultdetection}. We adapt it to
the supply-chain domain and couple it to a cryptoeconomic penalty.

\paragraph{Positioning.}
ECO/CPO-DAG is an \emph{audit and accountability} layer, not a consensus
protocol and not an oracle. It is complementary to
oracle aggregation (e.g.\ DECO's authenticated TLS
feeds~\cite{zhang2020deco} and adaptive conformal consensus over multiple
oracles~\cite{park2023acon2}), which govern how off-chain data
\emph{enters} the system; our layer governs \emph{post-hoc accountability}
once claims are on the record. It borrows the DAG data structure from
recent high-throughput consensus designs
(Narwhal/Tusk~\cite{danezis2022narwhal},
Bullshark~\cite{spiegelman2022bullshark},
DAG-Rider~\cite{keidar2021dagrider}) but uses the DAG only for causal
\emph{ordering and detectability}, not for agreement: each participant
maintains a local view, and no total order is required. Constraints are
aligned with GS1 EPCIS~2.0 event semantics~\cite{gs1epcis2022} so that
detection is defined over events practitioners already
emit.

\paragraph{Contributions.}
\begin{enumerate}\setlength{\itemsep}{1pt}\setlength{\topsep}{2pt}
\item A formal model of contradiction-based accountability for supply chains:
signed Event Claim Objects (ECOs) in a causally ordered append-only DAG,
and Contradiction Proof Objects (CPOs) with stated soundness properties
(\S\ref{sec:model}, \S\ref{sec:protocol}).
\item A mapping of five domain-constraint classes onto EPCIS event types,
making the ``domain-specific'' claim concrete (\S\ref{sec:protocol}).
\item A cryptoeconomic mechanism whose deterrence condition we derive
under rational and $k$-party-collusion adversaries (\S\ref{sec:economics}).
\item A privacy layer using commitment schemes and optional zero-knowledge
contradiction proofs, with an explicit disclosure taxonomy and an
acknowledgment that some contradiction classes cannot be checked privately
(\S\ref{sec:privacy}).
\item An analytical security and feasibility analysis---detection
probability, no-false-accusation, storage---with every quantitative claim
presented as a parameterized model output, together with a single-machine
reference implementation that corroborates the detection model against
measurement and reports measured costs and blame outcomes
(\S\ref{sec:security}--\S\ref{sec:eval}).
\end{enumerate}

\paragraph{Scope and honesty.}
The paper's core is analytical: all model numbers are outputs of the stated
models under stated parameters, marked as such throughout. A single-machine
reference implementation (\S\ref{sec:measured}) then \emph{corroborates} the
detection model against measurement and reports measured CPO costs, blame
outcomes, and storage. We are explicit about what it does not cover---the
fully zero-knowledge CPO, a multi-party network, and adaptive-adversary
evasion remain analytical---and its commitment/ZK figures are a conservative
upper bound (a MODP integer group and a naive Sigma protocol stand in for the
paper's curve-based Bulletproofs). The central limitation is intrinsic: the layer detects
\emph{contradictions}, so a party that lies consistently---never emitting a
claim that conflicts with its own prior claims---produces no CPO and is not
penalized. We regard this not as a flaw to be hidden but as the precise
boundary of what contradiction-based accountability can offer. A second
boundary is narrower still: on-chain \emph{slashing} fires only where blame is
attributable from the contradicting pair alone---self-equivocation or a
single-issuer monotone violation (\S\ref{sec:economics}); every other
contradiction yields a CPO that stands as portable evidence for off-chain
adjudication, not an automatic penalty.

\section{Related Work}
\label{sec:related}

\paragraph{Accountability in distributed systems.}
PeerReview \cite{haeberlen2007peerreview} makes nodes accountable by having
each keep a tamper-evident log and having others check it against a
reference implementation, exposing any observable deviation. The
fault-detection problem~\cite{haeberlen2009faultdetection} formalizes which
faults are detectable at all. Non-equivocation---preventing a party from
telling different things to different peers---has been enforced with trusted
hardware in A2M~\cite{chun2007a2m} and TrInc~\cite{levin2009trinc}, and
transparency logs (Certificate Transparency~\cite{laurie2013ct},
CONIKS~\cite{melara2015coniks}) make equivocation over a public log
detectable without trusted hardware. Accountable Universal Composability
(AUC)~\cite{graf2023auc} gives a
composable definition of accountability. Our CPO is in this lineage: it is
a portable, self-verifying witness of a specific detectable fault
(a domain-rule contradiction), but we add a \emph{cryptoeconomic}
consequence rather than only exposure.

\paragraph{DAG-structured logs and consensus.}
Causal DAGs underlie modern high-throughput BFT
(Narwhal/Tusk~\cite{danezis2022narwhal},
Bullshark~\cite{spiegelman2022bullshark},
DAG-Rider~\cite{keidar2021dagrider}), which raise the throughput of classical
leader-based BFT~\cite{castro2002pbft}, and non-blockchain ledgers. We reuse
the DAG for happened-before ordering via hybrid logical
clocks~\cite{lamport1978clocks,kulkarni2014hlc} but explicitly do
\emph{not} run agreement over it; the design goal is auditability, not a
total order.

\paragraph{Oracles.}
Town Crier~\cite{zhang2016towncrier} and DECO~\cite{zhang2020deco}
authenticate data as it enters a smart contract;
Ekiden~\cite{cheng2019ekiden} adds confidential execution;
ACon$^2$~\cite{park2023acon2} aggregates multiple oracle reports with
statistical guarantees. These address \emph{ingestion}. Our layer is
downstream and orthogonal: it operates on claims already recorded and asks
whether they are mutually consistent.

\paragraph{Cryptoeconomics.}
Slashing of staked deposits for provable misbehavior is standard in
proof-of-stake finality~\cite{buterin2017casper}; the broader study of
rational adversaries and extractable value~\cite{daian2020flashboys}
motivates a game-theoretic treatment. We apply the slashing idea to
provable domain contradictions rather than protocol-level equivocation.

\paragraph{Fraud proofs.}
Structurally, a CPO is a \emph{domain-specific fraud proof}: a portable,
self-verifying witness of an invalid transition that triggers an economic
penalty. This is the primitive behind optimistic-rollup fault proofs and the
fraud/data-availability proofs of Al-Bassam, Sonnino, and
Buterin~\cite{albassam2018fraud}, where a single honest party can prove an
invalid state transition to everyone else. We specialize that idea to GS1
EPCIS domain rules with \emph{determinate blame}: the fraud proved is a
supply-chain-constraint violation by an identifiable issuer, not an invalid
rollup state root.

\paragraph{Cryptographic building blocks and supply chains.}
We use Pedersen commitments~\cite{pedersen1991vss}, Merkle
authentication~\cite{merkle1987signature}, succinct
proofs~\cite{groth2016size}, and CL-style (Camenisch--Lysyanskaya) selective-disclosure
credentials~\cite{camenisch2004signatures}. On the domain side, GS1
EPCIS~2.0~\cite{gs1epcis2022} supplies the event vocabulary; prior
blockchain-for-supply-chain work~\cite{saberi2019blockchain} motivates the
setting. Relative to that body of work, our differentiator is native
\emph{contradiction detection with determinate blame and slashing}, defined
over EPCIS events, rather than mere immutable recording.

Table~\ref{tab:compare} positions ECO/CPO-DAG against the closest mechanisms:
prior accountability logs detect equivocation but attach no economic
consequence and carry no domain semantics; staking systems slash but do not
detect cross-event domain contradictions. Ours is the only one combining
domain-rule contradiction detection, determinate blame, slashing, and
selective-disclosure privacy.

\begin{table}[t]
\centering
\small
\setlength{\tabcolsep}{3pt}
\begin{tabular}{@{}p{2.15cm}ccccc@{}}
\toprule
\textbf{Mechanism} & \textbf{Detect} & \textbf{Blame} & \textbf{Slash} & \textbf{EPCIS} & \textbf{Priv.} \\
\midrule
PeerReview \cite{haeberlen2007peerreview} & yes & part.\ & no & no & no \\
CT/CONIKS \cite{laurie2013ct,melara2015coniks} & yes & yes & no & no & no \\
A2M/TrInc~\cite{chun2007a2m,levin2009trinc} & yes & yes & no & no & no \\
PoS slashing~\cite{buterin2017casper} & part.\ & yes & yes & no & no \\
Oracle staking & no & yes & yes & no & no \\
\textbf{This work} & \textbf{yes} & \textbf{yes} & \textbf{yes} & \textbf{yes} & \textbf{yes} \\
\bottomrule
\end{tabular}
\caption{Positioning against related mechanisms. \textbf{Detect}=contradiction
detection (PeerReview: spec deviation; CT/CONIKS, A2M/TrInc: equivocation; PoS:
protocol-level equivocation only; oracle staking: no \emph{native cross-event
domain-rule} detection, as it aggregates independent reports; ours: domain-rule
contradictions).
\textbf{Blame}=determinate blame (ours bounded to self-equivocation and
monotone violations; other contradictions yield $\bot$). \textbf{Slash}=economic
slashing. \textbf{EPCIS}=domain-specific supply-chain semantics.
\textbf{Priv.}=selective-disclosure privacy.}
\label{tab:compare}
\end{table}

\section{System Model}
\label{sec:model}

\paragraph{Notation.}
Symbols used in the analysis: $\kappa$ (constraint class); $h$ (number of
honest observers/watchtowers); $p, p_{\min}$ (aggregate and per-observer
detection probability); $\lambda$ (per-observer detection rate); $t$ (detection
horizon); $S$ (stake); $\alpha$ (overcollateralization factor); $V$ (maximum
single-transaction value); $R$ (history-based risk factor); $g$ (one-shot gain
from lying); $k$ (collusion size); $D$ (CPO anti-spam deposit); and $f$
(tolerated Byzantine faults among the $n$ checkpoint participants).

\subsection{Objects}

\begin{definition}[Event Claim Object]
An \eco{} is a tuple
\[
\eco = \langle \mathrm{id},\, \mathrm{cm},\, \sigma,\, \tau,\, \mathrm{refs},\, \mathrm{subj} \rangle
\]
where $\mathrm{cm} = \Commit(\mathrm{claim}, r)$ is a hiding, binding
commitment to the claim payload under randomness $r$;
$\mathrm{subj}$ names the subject of the claim (e.g.\ a GS1 lot or item
identifier); $\tau$ is a hybrid logical clock timestamp;
$\mathrm{refs}$ is the set of causal-parent \eco{} identifiers;
$\mathrm{pk}$ is the issuer's public key;
$\mathrm{id} = \Hash(\mathtt{dsep}\,\|\,\mathrm{pk}\,\|\,\mathrm{cm}\,\|\,\tau\,\|\,\mathrm{refs}\,\|\,\mathrm{subj})$
with a domain-separation tag $\mathtt{dsep}$, so two issuers committing to
identical fields still mint distinct ids; and
$\sigma = \Sign(\mathit{sk}, \mathrm{id})$ binds the object to the issuer's
identity.
\end{definition}

The commitment $\mathrm{cm}$ lets an \eco{} be published without revealing
its payload; the payload is opened only when a challenge requires it
(\S\ref{sec:privacy}).

\begin{definition}[Contradiction Proof Object]
A \cpo{} is a tuple
\[
\cpo = \langle \eco_1,\, \eco_2,\, \kappa,\, \pi,\, d,\, D \rangle
\]
asserting that $\eco_1$ and $\eco_2$ jointly violate constraint
$\kappa \in \{\textsf{spatial}, \textsf{temporal}, \textsf{quantity},
\textsf{quality}, \textsf{regulatory}\}$. The witness $\pi$ is either the
two openings of $\mathrm{cm}_1, \mathrm{cm}_2$ together with a check that
$\kappa$ is violated, or a zero-knowledge proof of the same statement
(\S\ref{sec:privacy}). The challenger identity is $d$, and $D$ is a
refundable anti-spam deposit.
\end{definition}

\begin{definition}[Supply-Chain DAG]
The log is a directed acyclic graph $G = (V, E)$ with $V$ the set of
published \eco{}s and $E$ the causal edges induced by $\mathrm{refs}$.
The clock invariant is
$\forall (u \to v) \in E:\ \tau(u) < \tau(v)$
under the hybrid-logical-clock order~\cite{kulkarni2014hlc}, giving a
partial happened-before order~\cite{lamport1978clocks}. The graph is
append-only; there is no global agreement on $V$, and each participant holds
a local view $G_i \subseteq G$.
\end{definition}

Figure~\ref{fig:dag} illustrates the structure and a same-issuer
equivocation witnessed by a CPO.

\begin{figure}[t]
\centering
\includegraphics[width=\linewidth]{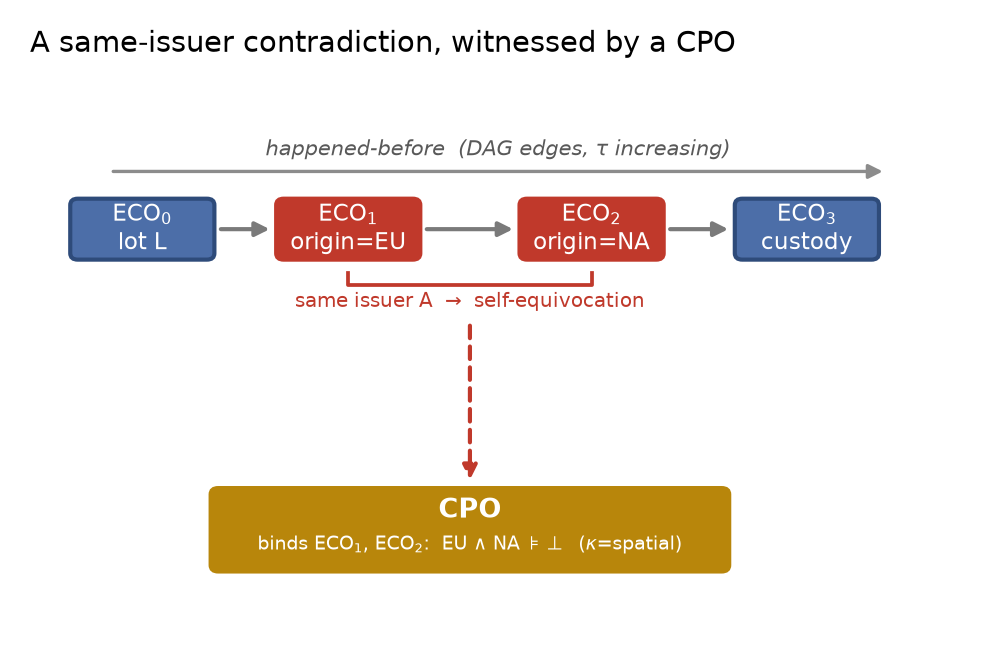}
\caption{Signed claims (ECOs) form a causally ordered, append-only DAG whose
edges are happened-before relations. A single issuer's two claims about one
subject (lot~$L$) assert incompatible origins; because both carry that
issuer's valid signatures, any observer can compile a CPO---a self-verifying
witness that
$\mathrm{origin}(L){=}\text{EU} \wedge \mathrm{origin}(L){=}\text{NA}
\models \bot$. The CPO proves the contradiction unconditionally; it
pins blame only when \S\ref{sec:economics} resolves determinate fault---here,
self-equivocation by one issuer. A contradiction between two different
issuers would be equally provable but need not be attributable, and may
resolve to $\bot$.}
\label{fig:dag}
\end{figure}

\subsection{Assumptions}

We separate the assumption needed to \emph{detect} a contradiction from the
stronger one needed to \emph{disseminate} the proof reliably.

\begin{assumption}[Detection]
\label{asm:detection}
At least one honest observer exists with (i) read access to the relevant
same-subject \eco{}s (every claim sharing the subject --- not only the
subject's declared causal ancestors) and (ii) the ability to broadcast a
\cpo{}.
Detection requires only this single honest observer; it does not require an
honest majority. This is a \emph{possibility} statement: a bounded-work
observer samples only a fraction of candidate pairs, so its per-observer
detection probability is $p_{\min}<1$ rather than certain; \S\ref{sec:security}
models this and shows how coverage grows with the number of observers $h$.
\end{assumption}

\begin{assumption}[Propagation]
The network is partially synchronous with an eventual but unknown message
delay bound~\cite{dwork1988partialsynchrony}, and a Byzantine reliable
broadcast primitive~\cite{bracha1987brb} is available. For the checkpoint
and finality claims of \S\ref{sec:security} we assume at most $f < n/3$
Byzantine participants among the $n$ that maintain checkpoints.
\end{assumption}

\begin{assumption}[Cryptography]
The hash $\Hash$ is collision-resistant, the signature scheme is
existentially unforgeable, and the commitment $\Commit$ is computationally
binding and hiding~\cite{pedersen1991vss}.
\end{assumption}

\subsection{Adversary}

Identities are permissioned (each participant holds a certified key) while
\emph{actions} are permissionless (any identity may emit any claim). We
consider a hybrid adversary: \emph{rational} parties maximize expected
profit and respond to incentives, and may additionally act
\emph{Byzantine}---emitting arbitrary or griefing messages---at a cost. The
adversary may equivocate, backdate within clock tolerance, collude in
groups of size $k$, and attempt to spam CPOs. The adversary cannot forge
signatures or hash collisions, and cannot suppress messages indefinitely
after the global stabilization time under Assumption~2.

\section{Protocol Design}
\label{sec:protocol}

\subsection{Emitting a claim}

A party creates an \eco{} by committing to its claim, timestamping it with
the local hybrid logical clock, linking it to its causal parents, and
signing the resulting identifier (Algorithm~\ref{alg:eco}). Publishing the
commitment rather than the plaintext is what enables selective disclosure:
the object is on the record and non-repudiable, but its contents remain
hidden until opened.

\begin{algo}{\textnormal{\textsc{CreateECO}}$(\mathrm{claim},\mathrm{subj},\mathrm{refs})$}
\label{alg:eco}
1: $r \gets$ fresh randomness
2: $\mathrm{cm} \gets \Commit(\mathrm{claim}, r)$
3: $\tau \gets \textsc{HLC.now}()$
4: $\mathrm{id} \gets \Hash(\mathtt{dsep}\|\mathrm{pk}\|\mathrm{cm}\|\tau\|\mathrm{refs}\|\mathrm{subj})$
5: $\sigma \gets \Sign(\mathit{sk}, \mathrm{id})$
6: $\eco \gets \langle \mathrm{id}, \mathrm{cm}, \sigma, \tau, \mathrm{refs}, \mathrm{subj}\rangle$
7: \textsc{Broadcast}($\eco$); store $r$ locally; \textbf{return} $\eco$
\end{algo}

\subsection{Maintaining the DAG}

Each participant appends received \eco{}s to its local view after checking
the signature and that all causal parents in $\mathrm{refs}$ are present and
clock-consistent. There is no ordering vote: the DAG is a shared
append-only substrate, and disagreement about membership is tolerated
because detection needs only that \emph{one} honest party sees both sides of
a contradiction. Claim emission and contradiction detection are therefore
\emph{consensus-free}: they require no agreement on $V$. Only the
\emph{optional} checkpoint pruning and CPO-finality claims of
\S\ref{sec:security} invoke Byzantine reliable broadcast and the $f<n/3$
bound (Assumption~2); a deployment that forgoes pruning and on-chain finality
runs the accountability layer with no BFT layer at all. Periodic checkpoints
(\S\ref{sec:security}) allow old history to be pruned to a Merkle
root~\cite{merkle1987signature}.

\subsection{Detecting a contradiction}

On receiving a new \eco{}, an observer indexes claims by subject and tests
the new claim against same-subject claims in its local view. To bound work
and resist denial of service, $\mathrm{refs}$ and radius limits prioritize
the same-subject walk under load, and each sender is rate-limited; a
candidate CPO carries a refundable deposit (Algorithm~\ref{alg:detect}).

\begin{algo}{\textnormal{\textsc{Detect}}$(\eco_{\mathrm{new}}, G_i)$ \textnormal{(rate-limited)}}
\label{alg:detect}
1: \textbf{if} \textsc{RateExceeded}($\eco_{\mathrm{new}}.\mathrm{issuer}$) \textbf{then return}
2: $A \gets \textsc{SubjectClaims}(\eco_{\mathrm{new}}, G_i)$
3: \textbf{for} $\eco_{\mathrm{old}} \in A$, $\kappa \in \textsf{Constraints}$ \textbf{do}
4: \quad \textbf{if} \textsc{Violates}($\eco_{\mathrm{new}}, \eco_{\mathrm{old}}, \kappa$) \textbf{then}
5: \quad\quad $D \gets \textsc{LockDeposit}()$
6: \quad\quad $\cpo \gets \textsc{MakeCPO}(\eco_{\mathrm{new}}, \eco_{\mathrm{old}}, \kappa, D)$; \textsc{Broadcast}($\cpo$)
7: $G_i.\textsc{append}(\eco_{\mathrm{new}})$
\end{algo}

\paragraph{Detection scope (omitted parents).}
\textsc{SubjectClaims} is not limited to the parents an ECO declares in
$\mathrm{refs}$: an observer indexes every ECO in its local view by
$\mathrm{subj}$ and checks a new claim against same-subject claims it holds,
not only the declared causal ancestors. Omitting a known parent to fork a
subject into a disconnected timeline therefore does not evade detection; two
same-subject claims are comparable whether or not $\mathrm{refs}$ links them,
and a missing-but-expected link is itself evidence that can seed a temporal
or spatial CPO. The $\mathrm{refs}$ field and the radius limit are an
\emph{optimization} that orders and bounds the walk under load; they are not the
detection boundary. The bound is not free, however: under sustained load an
adversary can aim a contradiction at a causally distant same-subject pair the
budgeted radius deprioritizes, so completeness over same-subject pairs holds
only up to the per-observer work budget --- that residual is exactly what the
sampled per-observer coverage $p_{\min}$ of \S\ref{sec:security} models, not a
guarantee of exhaustive checking.

\paragraph{What observers can inspect.}
Detection needs field values, yet payloads sit behind commitments, so
\textsc{Violates} runs over whatever disclosure tier a subject exposes:
(a) fields published in plaintext, such as many EPCIS routing fields (location, timestamp, subject id);
(b) fields the issuer selectively discloses under a standing policy or on
challenge (\S\ref{sec:privacy}); (c) fields a \emph{watchtower} --- an honest
observer that monitors the DAG for contradictions, in the sense of a
Lightning-Network watchtower~\cite{mccorry2019pisa} --- is explicitly authorized to decrypt; or
(d) predicates checkable in zero knowledge over
the commitments (\S\ref{sec:privacy}) without opening them. A contradiction
is discoverable exactly when the fields its rule touches fall in one of
these tiers. A subject that commits everything and authorizes nothing is
auditable only via the ZK/disclosure paths discussed in \S\ref{sec:privacy}
(summarized in Table~\ref{tab:disclosure}), and some classes (such as quality
over a chain) then cannot be checked at all without disclosure. This is the
concrete privacy/detection tradeoff the paper flags in \S\ref{sec:privacy}.

This resolves an apparent tension with the Detectability theorem
(\S\ref{sec:security}), which wants \emph{many} independent watchtowers (large
$h$) while the privacy layer wants \emph{few} holders of the openings. The two
coexist because detection scales with $h$ only over tiers (a), (b), and (d) ---
fields that are public, disclosed-by-policy, or ZK-checkable without opening.
Where a constraint's fields stay fully committed and undisclosed, detection is
confined to the parties that already hold the openings --- typically the two
counterparties to the event --- and a ZK-CPO lets such a holder prove the
contradiction to everyone else without revealing it. Growing $h$ therefore
strengthens detection for the public and ZK-checkable constraints; it does not
conjure openings that no watchtower holds. A deployment picks, per field, how
far up the detectability-vs-confidentiality curve it wants to sit.

\subsection{Domain constraints over EPCIS}
\label{sec:constraints}

The constraint set is where the protocol becomes domain-specific.
Table~\ref{tab:epcis} maps each constraint class to the EPCIS~2.0 event type
it is checked against and the rule that defines a violation. These are the
predicates \textsc{Violates} evaluates. Each rule is a \emph{pairwise}
predicate over two same-subject events, keeping every class inside the two-ECO
CPO form for \emph{detection} (the quality class carries one further
qualification for \emph{slashing}, noted at the end of this section); the
regulatory class in particular pairs a compliance assertion with
a revocation or expiry record for the cited certificate, rather than flagging a
lone missing field (a single-event validity check, which sits outside the
contradiction primitive). For that pair to stay inside the DAG's trust base,
the certificate lifecycle must itself be represented as ECOs: the revocation
signed by the issuing authority acting as a participant, and expiry carried as
a validity interval inside the signed certificate claim rather than read from a
wall clock --- otherwise the regulatory class alone would import an external
truth oracle. The temporal rule warrants a note: the append-time
clock check (\S\ref{sec:protocol}) and the DAG invariant (\S\ref{sec:model})
already force monotone timestamps along every \emph{declared} edge, so a
temporal CPO is meaningful only among same-subject events that are
\emph{$\mathrm{refs}$-incomparable} (no path between them) --- it flags a pair
whose asserted event-times cannot be reconciled with the happened-before order
(backdating), and $\mathrm{refs}$-incomparable \emph{concurrent} events are
never a violation. Backdating \emph{within} the clock tolerance granted in
\S\ref{sec:model} is reconcilable by construction, so the temporal class is
blind up to that tolerance --- the tolerance is a deployment parameter trading
detection sharpness against clock-synchronization strictness. Because temporal contradictions across issuers resolve to
$\bot$, a spurious one is a false-CPO (spam) risk bounded by the deposit $D$,
not a false-slash risk. The quantity rule is read the same pairwise way: it
pairs an issuer's input-quantity claim against its own later output-quantity
claim for the subject (the EPCIS TransformationEvent input/output quantity
lists), so a mass-balance violation $\sum\text{outputs} > \sum\text{inputs}$ is
attributable to that single issuer from the pair. It is deliberately \emph{not}
read off a single AggregationEvent, which in EPCIS~2.0 models reversible
containment (packing cases onto a pallet, then disaggregating) where child
counts are conserved by definition --- there is no input-vs-output balance to
violate there.

The quality rule needs one qualification the closed classes do not.
\emph{Detecting} a candidate quality contradiction is pairwise --- two of the
issuer's own condition claims, one improving on the other --- but whether the
improvement is \emph{illegitimate} depends on the absence of an intervening
authorized transformation or certification, an existential condition over the
chain rather than a function of the two ECOs alone (a partial-view observer
missing the authorizing event would see a violation that isn't one). So the
quality class \emph{slashes} only in its closed form --- a
\textbf{self-equivocation on condition}, where the two same-issuer claims are
jointly impossible regardless of any intervening event (e.g.\ a
continuous-cold-chain assertion contradicted by that same issuer's own excursion
log). A monotone quality \emph{improvement} that an unobserved authorized
transformation could explain is not attributable from the pair alone and
resolves to $\bot$, exactly like the cross-issuer cases --- never an automatic
slash. This is what keeps the No-false-accusation guarantee
(\S\ref{sec:security}) intact for the quality class; unlike quantity, whose
$\sum\text{outputs} > \sum\text{inputs}$ check closes over the two ECOs, quality
closes over the pair only in the self-equivocation form.

\begin{table}[t]
\centering
\small
\begin{tabular}{@{}p{1.7cm}p{2.55cm}p{3.2cm}@{}}
\toprule
\textbf{Constraint} & \textbf{EPCIS event} & \textbf{Violation rule} \\
\midrule
Spatial & ObjectEvent & Two locations for one subject at one time \\
Temporal & all events & Backdating: two same-subject events whose asserted times contradict their happened-before order (concurrent events are not a violation) \\
Quantity & Transformation\-Event (input/output quantity lists) & $\sum\text{outputs} > \sum\text{inputs}$ across an issuer's own input- and output-quantity claims for the subject \\
Quality & ObjectEvent (disposition / sensor\-Element\-List) & Two same-issuer condition claims for the subject that are jointly impossible (self-equivocation on condition). A monotone \emph{improvement without authorization} is detectable but resolves to $\bot$ --- it is not closed over the pair (\S\ref{sec:constraints}) \\
Regulatory & Transaction\-Event & Compliance asserted against a certificate contradicted by a revocation/expiry record \\
\bottomrule
\end{tabular}
\caption{Domain constraints mapped to GS1 EPCIS~2.0 event
types~\cite{gs1epcis2022}. A CPO cites exactly one constraint class and the
two events whose pairing violates it.}
\label{tab:epcis}
\end{table}

\subsection{Challenging and adjudicating}

A published \cpo{} is verified by anyone: check both signatures, check that
the openings (or the zero-knowledge proof) are valid, and check that the
cited constraint is indeed violated. If verification succeeds and blame is
determinate (\S\ref{sec:economics}), the blamed party's stake is slashed and
the deposit-backed challenger is rewarded; if verification fails, the
challenger forfeits its deposit $D$. Figure~\ref{fig:pipeline} shows the
end-to-end flow.

\begin{figure}[t]
\centering
\includegraphics[width=\linewidth]{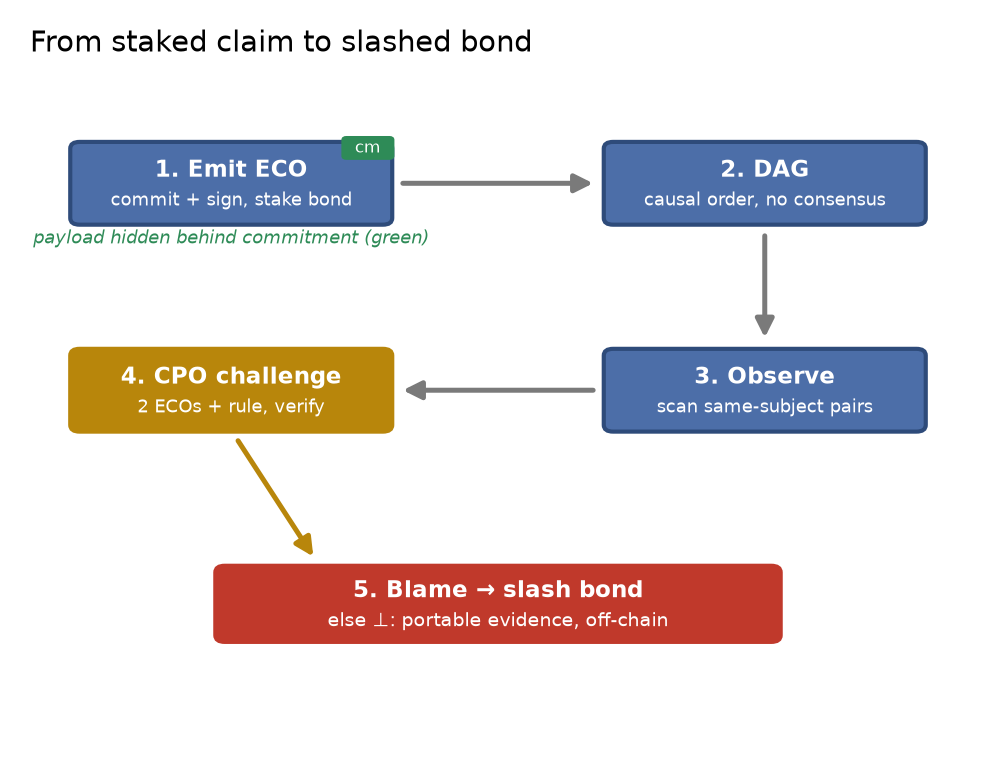}
\caption{The accountability layer. Parties emit ECOs against staked bonds;
the DAG orders them causally; observers monitor for constraint violations;
a CPO challenge is publicly verified and, on determinate blame, slashes the
bond. Commitments (green) let claim contents stay hidden until a challenge
forces the minimal opening.}
\label{fig:pipeline}
\end{figure}

\section{Cryptoeconomic Mechanism}
\label{sec:economics}

\subsection{Stake and blame}

Each participant $i$ posts a stake
$S_i = \alpha \, V_i \, R_i$,
where $V_i$ is the maximum single-transaction value it can assert,
$R_i \geq 1$ a history-based risk factor, and $\alpha > 1$ an
overcollateralization factor. Slashing requires \emph{determinate blame}:
a CPO is necessary evidence but not sufficient, because a bare contradiction
may not identify which of two parties is at fault.

\begin{definition}[Blame]
$\textsc{Blame}(\eco_1, \eco_2, \kappa) \to \{\eco_1.\text{issuer},\,
\eco_2.\text{issuer},\, \bot\}$ resolves fault by:
(i) \emph{self-equivocation}---if a single issuer signed both $\eco_1$ and
$\eco_2$ and the two claims are \emph{jointly unsatisfiable} (they cannot both
be true: two locations for one subject at one time, two irreconcilable
event-times, a jointly-impossible condition---e.g.\ a continuous-cold-chain
assertion against that issuer's own excursion log---or a compliance assertion
against that same issuer's own revocation/expiry record), blame that issuer;
(ii) \emph{conservation (monotone) violation}---if $\kappa$ is a conservation
constraint whose two claims are each individually satisfiable yet together
breach the domain rule (the quantity case: an output exceeding an input the
same issuer acknowledged) and the offender is attributable from the pair alone
(both events on that one issuer's self-authored chain), blame
that issuer; (iii) otherwise return $\bot$ (ambiguous), deferring to
off-chain adjudication. A contradiction between two different issuers'
single claims, where neither self-equivocates and no party is attributable
from the pair alone, is therefore always $\bot$: the protocol does not guess
which party lied. In particular, a quantity or quality mismatch that spans
two issuers---for example, a sender's under-reported input against a
receiver's truthful larger reading---is not automatically slashed; lacking
attribution from the pair alone it resolves to $\bot$, so an honest receiver
is never penalized for a counterparty's under-report. A \emph{same-issuer}
quality improvement is likewise $\bot$: it is not jointly unsatisfiable---an
authorized transformation could reconcile the two condition claims---and it is
not a conservation violation, so it satisfies neither (i) nor (ii)
(\S\ref{sec:constraints}). Slashing fires only when blame is not $\bot$.
\end{definition}

We deliberately reject a \emph{last-writer-wins} heuristic (blame whichever
party wrote later): it would slash an honest party whose truthful claim
merely contradicts a counterparty's earlier falsehood, violating the
No-false-accusation theorem (\S\ref{sec:security}). Restricting determinate
blame to the two cases in which fault is attributable from the pair alone is
what makes that theorem hold; all other contradictions are surfaced as
$\bot$-blame CPOs for off-chain adjudication rather than automatic slashing.

\paragraph{Which classes carry economic teeth.}
It follows that a slash fires only on (a) \emph{self-equivocation} --- one
issuer signing two \emph{jointly-unsatisfiable} claims that cannot both be
true; this covers spatial, temporal, same-issuer regulatory, and
\textbf{quality} (two of that issuer's own condition claims that cannot both
hold, e.g.\ a continuous-cold-chain assertion contradicted by that issuer's own
excursion log) --- or (b) a \emph{conservation violation closed over the pair}:
the \textbf{quantity} case, one issuer's own input claim against its own later
output claim, where $\sum\text{outputs} > \sum\text{inputs}$ is a self-contained
arithmetic check on two individually-consistent claims. Both are the same
two-ECO pairing used throughout (\S\ref{sec:protocol}), attributable to one
issuer from the pair alone, \emph{not} a single event checked in isolation. A
quality \emph{improvement} that an unobserved authorized transformation could
explain is \textbf{neither} jointly unsatisfiable \textbf{nor} a conservation
violation --- confirming that no authorization exists is a negative over a
possibly-partial view --- so it resolves to $\bot$ (off-chain), never an
automatic slash (\S\ref{sec:constraints}). Cross-issuer \textsf{spatial},
\textsf{temporal}, and \textsf{regulatory} contradictions, absent
self-equivocation, likewise resolve to $\bot$: surfaced as evidence for
off-chain adjudication, not slashed. The enforceable core is therefore
self-equivocation (including on condition) plus single-issuer quantity
conservation. We state this plainly because it is arguably a
larger practical limit than the consistent-liar blind spot we foreground
elsewhere: much of the value lies in \emph{creating portable, self-verifying
evidence} (the CPO) even where on-chain slashing does not fire.

\subsection{Deterrence condition}

Consider a party choosing between honest reporting and lying for a one-shot
gain $g$, where $g$ is the gain from lying \emph{net of} the honest payoff ---
that payoff is collected either way, so it cancels from the comparison and $g$
is the marginal gain. If detection probability is $p$ and the party is
slashed $S$ on detection, its expected utility from lying, relative to
reporting honestly, is
\begin{equation}
EU_{\text{lie}} = (1-p)\,g - p\,S .
\label{eq:eulie}
\end{equation}
Lying is deterred ($EU_{\text{lie}} < 0$) exactly when
\begin{equation}
S > \frac{(1-p)}{p}\, g .
\label{eq:deter}
\end{equation}
For $k$ parties who jointly falsify a single event and split one gain $g/k$
while each stakes $S$, the per-party deterrence condition becomes
\begin{equation}
S > \frac{(1-p)}{k\,p}\, g .
\label{eq:collude}
\end{equation}
Note the direction of Eq.~\eqref{eq:collude}: dividing a fixed gain across
more colluders makes each one easier to deter, so this is a best-case
reading that applies only to shared-gain joint falsification. It does not
model the more dangerous collusion in which parties cover for one another
and thereby lower the detection probability $p$ --- declining to report each
other, or coordinating claims so no checkable pair ever forms --- which raises
the stake required and is instead bounded by the detection assumption (a single
honest observer suffices, Assumption~\ref{asm:detection}) and by the watchtower
count $h$ in the Detectability theorem (\S\ref{sec:security}). Where each
colluder reaps a full gain $g$ rather than a shared
$g/k$, Eq.~\eqref{eq:deter} applies per party unchanged.

Eq.~\eqref{eq:eulie} carries one further assumption worth surfacing: it treats
$+g$ and $-S$ as mutually exclusive, i.e., detection \emph{reverses} the gain.
In a physical supply chain the fraud is often consummated before a CPO is
compiled --- the counterfeit part installed, the payment cleared --- so the
liar retains $g$ and only the stake is at risk:
$\mathrm{EU}_{\mathrm{lie}} = g - p\,S$, and deterrence needs $S > g/p$ (per
party, $S > g/(kp)$ in the shared-gain case). At $p = 0.5$ that is roughly
twice the stake of Eq.~\eqref{eq:deter}. A deployment whose lies are not
clawed back on detection should size $\alpha$ from the retained-gain
condition, which we treat as the conservative design target.

Equations~\eqref{eq:deter}--\eqref{eq:collude} are the design targets for
setting $\alpha$; Figure~\ref{fig:econ} plots them. We stress these are
\emph{model} relations under risk-neutral, one-shot assumptions; repeated
play and risk aversion only strengthen deterrence, while reputational value
of $g$ that exceeds a single transaction weakens it.

\begin{figure}[t]
\centering
\includegraphics[width=\linewidth]{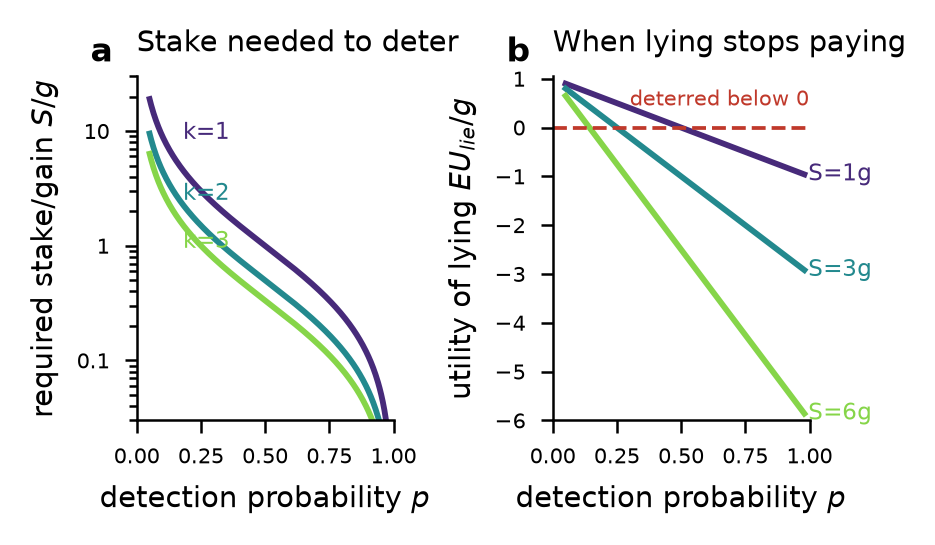}
\caption{Cryptoeconomic model outputs (not measurements). \textbf{Left:} the
stake-to-gain ratio required for deterrence, $S/g > (1-p)/(kp)$, versus
detection probability $p$ for collusion size $k$; higher detection lowers the
required stake, while larger $k$ lowers it only under the shared-single-gain
reading of Eq.~\eqref{eq:collude}. Collusion that instead suppresses
detection moves the requirement the other way.
\textbf{Right:} expected utility of lying from Eq.~\eqref{eq:eulie} for
several stake multiples $S{=}\{1,3,6\}g$; lying is deterred where the curve
crosses below zero (red line).}
\label{fig:econ}
\end{figure}

\subsection{Slashing and bounty}

On a valid CPO with determinate blame, the blamed stake is slashed and
redistributed. A concrete split---challenger/detector, corroborating
reporters, protocol treasury---trades off incentivizing detection against
discouraging frivolous or collusive self-challenges; the deposit $D$ on each
CPO bounds the cost of spam (\S\ref{sec:security}). We treat the exact split
as a governance parameter rather than a fixed constant, since its optimal
value depends on the deployment's watchtower economics. That dependence
deserves emphasis: monitoring is costly (\S\ref{sec:measured} measures
\textsc{Detect}() growing super-linearly in subject history) while a given
contradiction pays only its first reporter, so detection is a public good with
a free-rider structure --- rational watchtowers under-invest, and the $h$ that
the Detectability theorem takes as given is endogenous to this bounty design.
We flag it as an assumption of the same rank as the consistent-liar boundary:
the analysis quantifies coverage \emph{given} $h$; it does not derive $h$.

\section{Privacy Layer}
\label{sec:privacy}

\subsection{Commitment-based claims}

The public part of an \eco{} is $\langle \mathrm{id}, \tau, \mathrm{refs},
\mathrm{subj}, \mathrm{cm}\rangle$; the payload stays private under the
Pedersen commitment $\mathrm{cm}$~\cite{pedersen1991vss}, whose additive
homomorphism lets the aggregate of committed quantities be formed without
opening them. Because a quantity violation is an inequality
($\sum\text{outputs} > \sum\text{inputs}$), not an equality, checking it on
committed values requires a zero-knowledge range proof over the homomorphic
difference~\cite{bunz2018bulletproofs}, not the homomorphism alone: the
homomorphism yields a commitment to the difference, and the range proof
certifies its sign without revealing the amounts. An honest party can
therefore participate---and be held accountable---without publishing
commercially sensitive volumes, prices, or routes.

\subsection{Selective disclosure and ZK contradiction proofs}

When a challenge requires it, a party can open exactly the fields the
constraint touches, using CL-style (Camenisch--Lysyanskaya) selective-disclosure
credentials~\cite{camenisch2004signatures}: reveal a timestamp while hiding
quantity, or prove an aggregate without revealing components. A CPO can
itself be zero-knowledge: the challenger proves
\emph{``I know openings of $\mathrm{cm}_1, \mathrm{cm}_2$ under valid
signatures such that constraint $\kappa$ is violated''} without revealing
the claims, via a succinct proof~\cite{groth2016size}. For Groth16 we report only the
scheme's \emph{published} asymptotics (constant-size proof, constant-time
verification); we do not report gas costs. For the quantity range proof our
reference implementation (\S\ref{sec:measured}) reports measured size and
prove/verify time---but over a MODP integer group with a naive Sigma protocol,
so those figures are a conservative upper bound, not the curve-based
Bulletproofs a deployment would use. Groth16 also requires a per-circuit
trusted setup; a deployment unwilling to run one can substitute a
transparent-setup system (e.g.\ a Bulletproofs range proof for the quantity
constraint~\cite{bunz2018bulletproofs}) at the cost of larger proofs or slower
verification.
Table~\ref{tab:disclosure} gives the disclosure each constraint class
forces.

\begin{table}[t]
\centering
\small
\setlength{\tabcolsep}{4pt}
\begin{tabular}{@{}p{1.55cm}p{2.5cm}p{1.5cm}@{}}
\toprule
\textbf{Conflict} & \textbf{Min.\ disclosure} & \textbf{Privacy} \\
\midrule
Spatial & location fields & high \\
Temporal & timestamps & high \\
Quantity & amounts (or ZK aggregate) & medium \\
Quality & condition / sensor readings & low \\
Regulatory & certificate / credential (or ZK validity proof) & medium \\
\bottomrule
\end{tabular}
\caption{Disclosure forced by each contradiction class. Some classes
(quality degradation over a chain) resist private checking; we state this
tradeoff rather than claiming uniform privacy.}
\label{tab:disclosure}
\end{table}

\section{Security Analysis}
\label{sec:security}

We state three properties. Each is a claim about the model under the
assumptions of \S\ref{sec:model}; proofs are sketches, and we flag where a
property is only probabilistic.

\begin{theorem}[Detectability]
\label{thm:detect}
Let $h$ honest observers each independently detect a given contradiction
with probability at least $p_{\min}$ within horizon $t$. Then the
probability that at least one detects it is
\[
\Pr[\exists\, \text{detector}] \;\geq\; 1 - (1-p_{\min})^{h}.
\]
If detections arrive as independent Poisson processes with rate at least
$\lambda_{\min}$ per observer, then
$\Pr[\text{detected by } t] \geq 1 - e^{-h\lambda_{\min} t}$.
\end{theorem}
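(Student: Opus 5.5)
The plan is to pass to the complementary event and use independence for the first bound. Then I would obtain the Poisson bound as the special case $p_{\min} = 1 - e^{-\lambda_{\min} t}$.

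For the first claim, fix the contradiction and the horizon $t$. Let $E_j$ be the event that honest observer $j$ detects it within $t$, for $j=1,\dots,h$. By hypothesis the $E_j$ are independent and $\Pr[E_j]\ge p_{\min}$. The event that no observer detects is $\bigcap_j \overline{E_j}$, so independence gives
\[
\Pr\Bigl[\bigcap_{j} \overline{E_j}\Bigr]=\prod_{j=1}^{h}\bigl(1-\Pr[E_j]\bigr)\le (1-p_{\min})^{h}.
\]
The inequality holds factor by factor because each term lies in $[0,1]$ and is at most $1-p_{\min}$. Taking complements yields $\Pr[\exists\,\text{detector}]\ge 1-(1-p_{\min})^h$.

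For the temporal form, let observer $j$'s detections follow a Poisson process $N_j$ with rate $\lambda_j\ge\lambda_{\min}$, the processes being independent. I will read this as a homogeneous rate; with a time-varying intensity the same argument goes through using the integrated intensity. Observer $j$ has not detected by $t$ exactly when $N_j(t)=0$, which has probability $e^{-\lambda_j t}\le e^{-\lambda_{\min} t}$. So $\Pr[E_j]\ge 1-e^{-\lambda_{\min}t}$. Substituting $p_{\min}=1-e^{-\lambda_{\min}t}$ into the first claim gives $1-(e^{-\lambda_{\min}t})^h=1-e^{-h\lambda_{\min}t}$. Equivalently, the superposition of the $h$ independent processes is Poisson with rate $\sum_j\lambda_j\ge h\lambda_{\min}$, so the first arrival time is exponential, which gives the same bound directly.

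The calculation itself is routine. The real obstacle is making precise what "independently detect with probability at least $p_{\min}$" means, since per-observer bounds alone do not imply the product bound without independence. Shared local views, a common adversarial schedule, or correlated sampling of candidate pairs under the work budget of Assumption~\ref{asm:detection} could all create correlation. I would therefore state independence as a hypothesis, namely independent sampling randomness conditioned on the adversary's fixed choice of claims. I would also note that under positive correlation only the trivial bound $\Pr\ge p_{\min}$ survives, and remark that this is a model statement rather than a guarantee against adaptive evasion.
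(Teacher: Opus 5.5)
Your proposal is correct and matches the paper's proof sketch: complementation plus independence gives the $(1-p_{\min})^h$ bound, and your Poisson step (per-observer survival $e^{-\lambda_j t}\le e^{-\lambda_{\min}t}$, or equivalently the superposition with rate $\sum_j\lambda_j\ge h\lambda_{\min}$) is the same computation as the paper's ``minimum of $h$ independent exponentials.'' Your explicit treatment of rates $\lambda_j\ge\lambda_{\min}$ and your caveat that independence must be assumed both line up with the paper's own post-theorem discussion, where that independence is obtained from randomized candidate-pair sampling.
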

\begin{proof}[Proof sketch]
The events ``observer $j$ fails to detect'' are independent with probability
at most $(1-p_{\min})$; their conjunction has probability at most
$(1-p_{\min})^h$, and the complement is the bound. The Poisson form is the
minimum of $h$ independent exponentials with rate $\lambda_{\min}$.
\end{proof}

The independence of the $h$ per-observer events is an assumption, not a given:
observers running identical deterministic checks over identical local views are
perfectly correlated (all detect or none do), which collapses the bound to a
single Bernoulli trial. We obtain independence from \emph{randomized
same-subject candidate-pair sampling} (\S\ref{sec:eval})---each observer checks
an independently drawn fraction of the subject's candidate pairs---so
$p_{\min}$ is the per-observer
coverage of one draw and the $h$ draws are independent by construction. For a
\emph{specific} contradiction, moreover, an observer contributes $p_{\min}>0$
only if its local view holds \emph{both} ECOs of the pair and it samples
independently; an observer missing either side has $p_{\min}=0$ for that pair
regardless of $h$. So $h$ counts the observers that hold both ECOs and draw
independent samples --- not the raw observer count, and not merely the number
of distinct local views.

The bound scales in the number of \emph{honest observers}, not in the number
of adversaries---a deliberate feature: adding watchtowers strengthens
detection regardless of how many parties misbehave (Figure~\ref{fig:detect}).

\begin{figure}[t]
\centering
\includegraphics[width=\linewidth]{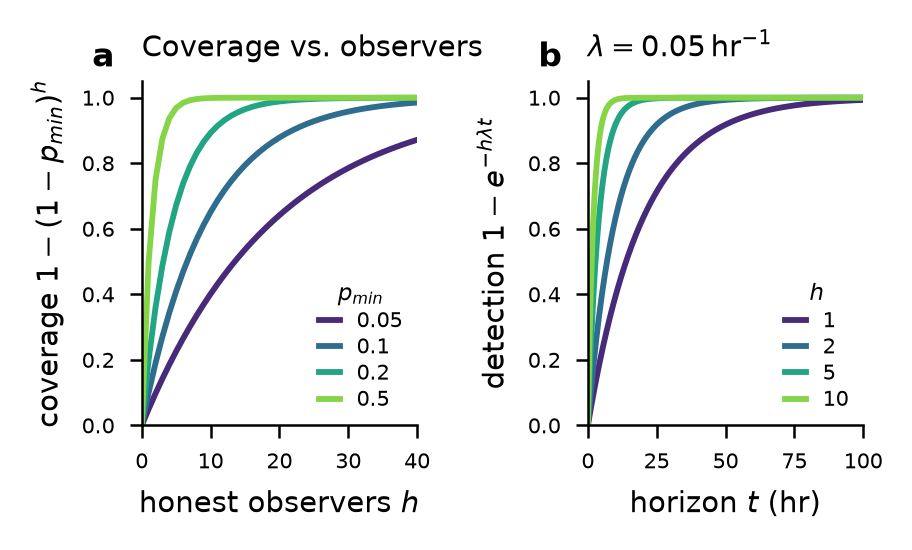}
\caption{Detection model outputs (not measurements). \textbf{Left:}
coverage $1-(1-p_{\min})^{h}$ versus honest observers $h$ for several
per-observer detection probabilities $p_{\min}$. \textbf{Right:} temporal
detection $1-e^{-h\lambda t}$ at $\lambda = 0.05\,\mathrm{hr}^{-1}$ per
observer. Both are consequences of Theorem~\ref{thm:detect} under the stated
parameters, which we do not claim to have measured.}
\label{fig:detect}
\end{figure}

\begin{theorem}[No false accusation]
\label{thm:nofalse}
Under the cryptographic assumptions, an honest party $P$ that (i) never signs
two claims that are jointly unsatisfiable,
and (ii) is never the attributable violator on a self-authored conservation (monotone) chain
(per \S\ref{sec:economics}), cannot be slashed, except with negligible
probability.
\end{theorem}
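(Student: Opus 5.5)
The plan is to argue by contraposition through the slashing pipeline. Slashing fires only on a \cpo{} that passes public verification and for which $\textsc{Blame}$ returns a non-$\bot$ value naming $P$. So suppose $P$ is slashed. Then some verified $\cpo = \langle \eco_1, \eco_2, \kappa, \pi, d, D\rangle$ has $\textsc{Blame}(\eco_1,\eco_2,\kappa) = P$. I will show that, outside a negligible event, this forces $P$ to have violated hypothesis (i) or (ii).

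\textbf{Step 1: reduce to a case where $P$ genuinely authored both claims.} Verification checks both signatures $\sigma_j$ on $\mathrm{id}_j$. Clause (iii) of the Blame definition returns $\bot$. So a non-$\bot$ outcome naming $P$ arises only from clause (i) or (ii), and both require that $P$ be the issuer of both $\eco_1$ and $\eco_2$. If $P$ never signed $\mathrm{id}_j$, a valid $\sigma_j$ under $P$'s key is a forgery. This gives a reduction to existential unforgeability. The reduction embeds the challenge key as $\mathrm{pk}_P$, simulates $P$'s honest signing via the oracle, and outputs the fresh $(\mathrm{id}_j,\sigma_j)$.

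A second, subtler route is that $P$ did sign $\mathrm{id}_j$, but the \cpo{} pairs that id with different contents. Since $\mathrm{id}_j = \Hash(\mathtt{dsep}\|\mathrm{pk}\|\mathrm{cm}\|\tau\|\mathrm{refs}\|\mathrm{subj})$, any alteration of these fields yielding the same id is a hash collision. So with overwhelming probability the public fields and the commitment $\mathrm{cm}_j$ are exactly those $P$ produced.

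\textbf{Step 2: bind the payloads.} The witness $\pi$ either opens $\mathrm{cm}_1,\mathrm{cm}_2$ or is a ZK proof of knowledge of such openings. In the opening case, an opening to a claim other than the one $P$ committed to breaks computational binding, via the standard reduction that outputs two openings of one commitment. In the ZK case, I invoke knowledge soundness of the proof system to extract openings and then apply the same binding argument. This needs an assumption slightly beyond the paper's stated list, soundness of the Groth16/range-proof system, which I would state explicitly. Conclude that, except with negligible probability, the payloads checked by $\textsc{Violates}$ are exactly $P$'s true committed claims.

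\textbf{Step 3: case analysis on the Blame clause.} If clause (i) fired, then $P$ signed two jointly unsatisfiable claims, contradicting hypothesis (i). If clause (ii) fired, then $P$ is the attributable violator on its own conservation chain, contradicting hypothesis (ii). A union bound over the polynomially many \cpo{}s an adversary can submit keeps the total failure probability negligible.

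The main obstacle is Step 2 in the ZK branch. The statement must be that the extracted openings match $P$'s actual commitments, not merely that \emph{some} violating openings exist. This requires the relation proven to fix $\mathrm{cm}_1,\mathrm{cm}_2$ as public inputs tied to the signed ids, and it requires simulation-extractability, so that a prover cannot reuse $P$'s own disclosed proofs. I would make that circuit-statement requirement explicit rather than gloss it.
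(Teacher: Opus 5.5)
Your proposal is correct and follows essentially the paper's own argument. The Blame definition confines slashing to the two same-issuer branches, unforgeability and collision resistance force $P$ to have genuinely signed both ECOs, and hypotheses (i) and (ii) then exclude the respective branches, with the paper's separately discussed quality-improvement case already absorbed by your clause-(iii) routing to $\bot$. Your Step~2 (commitment binding, plus knowledge soundness with simulation-extractability in the ZK branch) spells out a point the paper's sketch leaves implicit under ``the cryptographic assumptions,'' and you are right that proof-system soundness is not among the assumptions the paper actually states.
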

\begin{proof}[Proof sketch]
By the Blame definition (\S\ref{sec:economics}), slashing fires only on
self-equivocation or monotonic violation. Self-equivocation requires a
verifying CPO whose two ECOs both carry $P$'s valid signature; producing such a
pair without $P$ having signed both requires a signature forgery or a hash
collision, each negligible, so (i) rules it out. Monotonic blame requires $P$
to be attributable as the violator from the pair alone, which (ii) excludes; a
cross-party monotone mismatch in which $P$ authored only one event resolves to
$\bot$. The quality class is the delicate case: a monotone quality
\emph{improvement} is attributable from the pair alone only when the two
condition claims are jointly impossible (self-equivocation on condition, covered
by (i)); an improvement that an unobserved authorized transformation could
explain is not closed over the pair, so a partial-view observer missing $P$'s
authorizing event obtains a $\bot$-blame CPO, not a slash. Signature validity is
not the lever here --- every signature is genuine --- but the Blame rule
(\S\ref{sec:economics}) routes the existential case to $\bot$, so no honest $P$
is slashed by it. Hence no determinate-blame path assigns fault to an honest
$P$.
\end{proof}

\noindent\emph{Remark (key compromise).} The theorem is stated under
existential unforgeability; a \emph{compromised} signing key sits outside it
--- an attacker holding $P$'s key can self-equivocate as $P$ and have $P$
slashed with genuine signatures. Key custody, rotation, and post-compromise
recovery belong to the permissioned identity layer (\S\ref{sec:model}), not to
the protocol's guarantees, and a deployment attaching real financial loss to a
signature should treat them accordingly.

\begin{theorem}[Censorship resistance of proofs]
\label{thm:censor}
Under partial synchrony and Byzantine reliable broadcast with $f < n/3$,
every valid CPO broadcast by an honest node is delivered to all honest nodes
after the global stabilization time and is included in the next checkpoint,
except with negligible probability.
\end{theorem}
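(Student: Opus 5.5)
The plan is to split the statement into two claims: \emph{delivery} of a valid CPO to all honest nodes, and \emph{inclusion} of that CPO in the next checkpoint. Each is reduced to a standard property of the primitives in Assumption~2. Validity of the CPO is a locally checkable predicate. It needs signature verification on both ECOs, verification of the openings or ZK proof, and evaluation of the cited constraint. Under the cryptographic assumptions, every honest node therefore reaches the same verdict on a given CPO, except with negligible probability (a forgery, a hash collision, or a binding break). This common-verdict lemma is the first step. It lets us treat ``valid'' as a global property rather than a view-dependent one, so that no honest node discards a valid CPO that another honest node accepted.

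For delivery, the honest broadcaster invokes Byzantine reliable broadcast~\cite{bracha1987brb} on the CPO. With $f<n/3$, Bracha's protocol guarantees \emph{validity}: if the sender is honest, every honest node eventually delivers its message. It also guarantees \emph{agreement}: if one honest node delivers, all do. Both hold under asynchrony, and after GST partial synchrony~\cite{dwork1988partialsynchrony} bounds the delivery latency by a constant number of message delays. The adversary model already rules out indefinite suppression after GST. One detail must be checked: the anti-spam rate limiting of Algorithm~\ref{alg:detect} and the deposit $D$ must not let Byzantine nodes filter the CPO. Rate limits apply to the \emph{issuer} of ECOs and to the challenger's own sending budget, not to relaying. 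So I would state as a side condition that honest relays forward any valid CPO whose deposit is locked.

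For inclusion, I would model checkpoint formation as a BFT agreement instance among the $n$ checkpoint participants, since Assumption~2 gives $f<n/3$ for them. I would then argue as follows. Suppose the CPO is delivered to all honest nodes by some time $T$ after GST. Then every honest participant includes it in its proposal or vote set for the first checkpoint whose cutoff exceeds $T$. Any decided checkpoint must be certified by a quorum of $2f+1$ participants, and at least $f+1$ of these are honest. If checkpoint validity requires that certified content contain every valid CPO held by the certifying honest participants, then the adversary cannot finalize a checkpoint omitting the CPO without an honest participant refusing to sign. The intended argument is quorum intersection. It does not go through with only a proposal-inclusion rule, because a Byzantine leader could still omit the CPO. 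I would therefore fall back on the following: an honest participant signs a checkpoint only if it contains all valid CPOs it delivered before the cutoff. Under that rule, a quorum certificate implies inclusion, and liveness after GST guarantees that some checkpoint is eventually decided.

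The main obstacle is this inclusion step. ``The next checkpoint'' is sensitive to timing. A CPO delivered just before a cutoff may reach some honest participants before the cutoff and others after, and so legitimately miss that checkpoint. I would handle this by defining ``next'' as the first checkpoint whose cutoff is at least $\Delta$ after the honest broadcast, where $\Delta$ is the post-GST reliable-broadcast latency. I would also note that the negligible-probability clause absorbs only the cryptographic failure events, plus the randomized-coin failure if the agreement protocol is probabilistic.
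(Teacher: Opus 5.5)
Your proposal is correct and follows the paper's own sketch: delivery comes from the validity and agreement (totality) properties of Bracha's reliable broadcast under $f<n/3$, and inclusion comes from checkpoints committing to delivered CPOs, backed by the honest supermajority. Your explicit signing rule (an honest participant certifies a checkpoint only if it contains every valid CPO it delivered before the cutoff) and your $\Delta$-offset reading of ``next checkpoint'' spell out what the paper's phrase ``checkpoints commit to all delivered CPOs'' assumes without stating.
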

\begin{proof}[Proof sketch]
Delivery is the agreement and validity property of Bracha's reliable
broadcast~\cite{bracha1987brb} under $f < n/3$; inclusion follows because
checkpoints commit to all delivered CPOs, and honest nodes constitute a
supermajority.
\end{proof}

This theorem is where the design admits a \emph{bounded} dependence on
agreement, and we scope it precisely: there is no consensus on the event DAG
itself (\S\ref{sec:protocol}) --- emission and detection need none --- but
\emph{finality and pruning} assume a lightweight BFT checkpoint layer among the
$n$ checkpoint participants, i.e.\ periodic agreement on a Merkle root of
delivered CPOs under $f<n/3$. We assume rather than construct that layer; any
standard BFT agreement (e.g.\ PBFT~\cite{castro2002pbft}) instantiates it. A
deployment that forgoes on-chain finality and pruning drops this theorem and
runs with no agreement at all, keeping detection and portable CPO evidence
intact.

\paragraph{Attack surface.}
Table~\ref{tab:attacks} summarizes the qualitative posture. We give
directional detectability---not numeric per-attack rates; the one
quantitative statement is the model of
Theorem~\ref{thm:detect}, which our reference implementation corroborates
against measurement (\S\ref{sec:measured}). The row that matters most is the last: a party
lying consistently emits no contradiction, so the layer's detection is
\emph{none}, and only a complementary oracle can catch it.

\begin{table}[t]
\centering
\small
\setlength{\tabcolsep}{4pt}
\begin{tabular}{@{}p{2.35cm}p{1.7cm}p{2.2cm}@{}}
\toprule
\textbf{Attack} & \textbf{Detected?} & \textbf{Mitigation} \\
\midrule
Overt equivocation & yes (Thm~1) & slash on CPO \\
Collusion ($k$ parties) & shared-gain: yes; suppression: partial (\S\ref{sec:economics}) & scaled stake + watchtowers (\S\ref{sec:economics}) \\
Adaptive evasion & up to sampled coverage $p_{\min}$ (\S\ref{sec:security}) & more watchtowers ($h$) \\
Consistent lying & \textbf{no} & complementary oracle \\
CPO spam (DoS) & n/a & deposit $D$ + rate limit \\
\bottomrule
\end{tabular}
\caption{Qualitative attack posture. ``Detected?'' is directional, not a
measured rate. Consistent lying is out of scope by construction.}
\label{tab:attacks}
\end{table}

\section{Analytical Evaluation}
\label{sec:eval}

This section derives feasibility figures from first principles and states
every assumption; \S\ref{sec:measured} reports measured values from a
single-machine reference implementation alongside the estimates, and finds
them consistent.

\subsection{Storage}
At $10^4$ ECOs/day \emph{entering a participant's local view} (the subject
subgraphs it retains, not only the objects it authors) and an assumed
$\approx 256$~B per object (identifier,
commitment, signature, a bounded reference set, and timestamp), raw growth
is $2.56$~MB/day, or $\approx 0.93$~GB/year; adding CPOs and metadata gives
an order-of-magnitude estimate of $1$~GB per participant per year.
Checkpoint pruning to Merkle roots bounds the retained working set below the
cumulative total. Pruning interacts with detection: a contradiction is
compilable into a CPO only while \emph{both} of its ECOs remain in some honest
observer's unpruned window, so a checkpoint must retain a per-subject digest
sufficient to seed a CPO --- otherwise a contradiction against a pruned-away
claim is no longer actionable, and detection is complete only within the
unpruned horizon. The deterrence condition inherits this: the $p$ in
Eq.~\eqref{eq:deter} is the coverage achieved \emph{within the retention
horizon}, not an asymptotic value, and keeping a contradiction actionable
across a checkpoint requires the digest to retain enough of the opening to
seed a CPO --- a storage cost that partially offsets pruning. The per-object size and reference-count assumptions are
stated so the estimate can be recomputed; the reference implementation
measures $211.9$~B/object over ${\sim}2560$ objects---slightly below the
$256$~B assumption---recomputing it from the actual reference-set distribution
(\S\ref{sec:measured}).

\subsection{Detection coverage}
Coverage follows Theorem~\ref{thm:detect}: it is a function of
$(p_{\min}, h, t)$ only. As an illustration of the tradeoff, lowering the
same-subject candidate-pair sampling fraction reduces per-object work and storage but lowers
$p_{\min}$, moving a deployment down the curves of
Figure~\ref{fig:detect}. We deliberately do not tabulate ``$99.99\%$''
detection figures as if observed; the honest statement is that coverage is
whatever the chosen $(p_{\min}, h, t)$ yield. The reference implementation
measures this directly and the model holds: with a measured
$\hat{p}_{\min} \approx 0.50$, the predicted $1-(1-\hat{p}_{\min})^{h}$ band
overlaps the measured 95\% confidence interval at every observer count
(\S\ref{sec:measured}).

\subsection{Economic viability}
For a transaction value $V_i$ and $\alpha \approx 1.5$, the stake is
$\approx 1.5\,V_i$; its carrying cost is the stake times the cost of
capital. Deterrence holds whenever Eq.~\eqref{eq:deter} is satisfied for the
deployment's $(p, g)$, which for moderate detection ($p \gtrsim 0.5$) needs
$S \gtrsim g$ --- or $S \gtrsim 2g$ under the retained-gain reading of
\S\ref{sec:economics}, the conservative target where fraud is consummated
before detection. The $\alpha \approx 1.5$ illustration therefore assumes the gain
from lying is of the same order as the transaction value ($g \approx V$); where
a single lie captures downstream, market, or reputational value with $g > V$
(\S\ref{sec:economics}), $\alpha$ must be sized from the $g/V$ ratio so that
$S = \alpha V R$ still exceeds $\tfrac{(1-p)}{p}\,g$ --- it is not a fixed
$\approx 1.5$. Small suppliers for whom $S$ is prohibitive can pool stake, at
the cost of shared slashing risk.

\subsection{Measured reference evaluation}
\label{sec:measured}

To check the model rather than merely assert it, we built a single-machine
reference implementation (Python; Ed25519 signatures, prime-order Pedersen
commitments, a zero-knowledge range proof over the homomorphic difference, the
five \textsc{Violates} predicates, the Blame resolver of
\S\ref{sec:economics}, and an $N$-participant / $h$-watchtower simulation with
independent-seed candidate-pair sampling). Every number below is reproducible
from a fixed seed by one command, and the code plus labelled synthetic traces
are released (see \emph{Reproducibility}). It is a \emph{reference}, not a
production build: commitments and the ZK proof run over a 2048-bit MODP integer
group with a naive Sigma protocol rather than the paper's curve-based
Bulletproofs, so the ZK size/time figures are a \textbf{conservative upper
bound}; hashing is BLAKE2b in place of BLAKE3; and all traces are synthetic
with recorded generator parameters.

\paragraph{Detection: the model is confirmed.}
We measure single-observer, single-draw coverage directly as
$\hat{p}_{\min} = 0.501$ (Wilson 95\% CI $[0.465, 0.537]$, $n = 736$).
Predicting $1-(1-\hat{p}_{\min})^{h}$ from that number alone and
\emph{independently} measuring multi-observer coverage, the predicted band
overlaps the measured Wilson CI at every observer count
$h \in \{1,2,3,4,6,8\}$---e.g.\ $h = 3$ measured $0.861\,[0.835, 0.885]$ vs.\
predicted $0.876\,[0.847, 0.901]$, and $h = 8$ measured $1.000\,[0.995, 1.000]$
vs.\ predicted $0.996\,[0.993, 0.998]$. The Detectability theorem's
independent-sampling model is thus empirically corroborated, not merely
assumed; the prediction is not a fit, since $\hat{p}_{\min}$ comes from a
single-draw measurement and the multi-observer points are a separate one.
One caveat keeps the claim honest: the harness \emph{constructs} the
independence the theorem assumes (each watchtower samples with an
independently seeded draw), so the measurement corroborates the composition
law given independence --- it is not field evidence that deployed watchtowers
achieve independent coverage.
Figure~\ref{fig:measured-detect} overlays the measurement on the model band.

\begin{figure}[t]
  \centering
  \includegraphics[width=\linewidth]{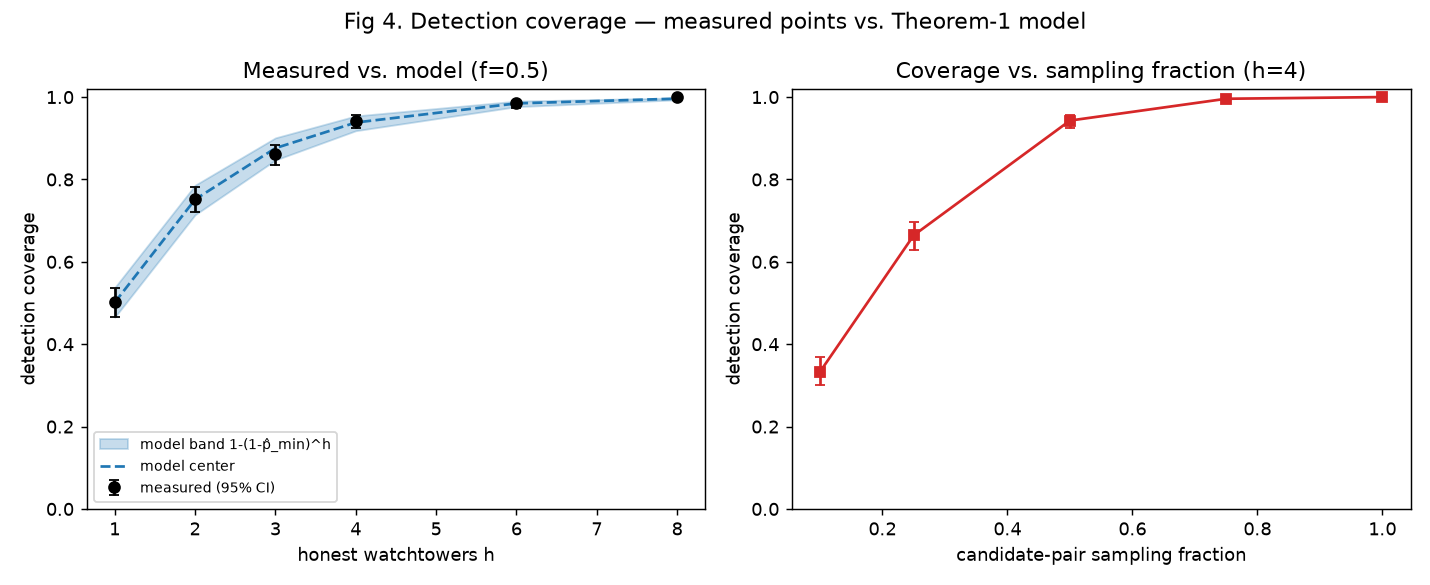}
  \caption{Measured detection coverage vs.\ the Theorem~\ref{thm:detect}
  model (reference implementation, \S\ref{sec:measured}). \emph{Left:} measured
  coverage (black, 95\% CI) against honest watchtowers $h$ overlaid on the model
  band $1-(1-\hat{p}_{\min})^{h}$ from the measured single-observer
  $\hat{p}_{\min}=0.501$; the measurement tracks the band at every $h$.
  \emph{Right:} measured coverage vs.\ the same-subject candidate-pair sampling
  fraction at $h=4$, showing the work/coverage trade-off the sampling knob
  controls. Unlike Figure~\ref{fig:detect}, these are \emph{measurements}, not
  model outputs.}
  \label{fig:measured-detect}
\end{figure}

\paragraph{No false accusation, measured.}
Across the runs, \textbf{zero} honest parties were slashed; precision was
\textbf{1.000} ($280/280$ produced CPOs were true contradictions, no false
positives); and a zero-injection control over 600 honest events produced
\textbf{0} CPOs. Determinate blame broke down as 52 self-equivocation slashes,
7 conservation (quantity) slashes, and 11 $\bot$ resolutions routed to
off-chain adjudication---exercising \emph{both} slashable branches of
\S\ref{sec:economics} and the honest-receiver $\bot$ case, with no honest party
caught. Precision $1.000$ is partly by construction --- a verifying CPO is a
true contradiction absent implementation bugs --- so the informative outcome
here is the Blame routing itself: the 11 cross-party cases went to $\bot$, not
to a slash, and no honest party was ever charged.

\paragraph{Costs.}
Plaintext CPO generation is ${\approx}\,1.5$~\textmu s and verification
${\approx}\,1.1$~ms (median, per constraint class). The watchtower
\textsc{Detect}() cost per incoming ECO grows super-linearly with a subject's
history ($2.7$~\textmu s at history 1 to ${\approx}\,3.0$~ms at history 100),
which is exactly the unbounded baseline the radius limit and candidate-pair
sampling exist to bound. The quantity ZK range proof (conservative MODP/Sigma
bound) is $28.7$~KB with ${\approx}\,2.1$~s prove and verify at a 16-bit range.

\paragraph{Reproducibility.}
The implementation, its tests, a one-command reproduction
(\texttt{scripts/reproduce.py --seed 0}), the raw CSVs, and the generated
results live in the companion repository, archived at
\href{https://doi.org/10.5281/zenodo.21114383}{10.5281/zenodo.21114383}; the
labelled synthetic EPCIS trace set is released as a separate citable benchmark
at \href{https://doi.org/10.5281/zenodo.21114601}{10.5281/zenodo.21114601}.

\section{Illustrative Scenarios}
\label{sec:cases}

Table~\ref{tab:cases} traces six scenarios through the protocol. They are
walked-through illustrations of the mechanism's reach and limits, not
experiments. The final two rows are the honest cases: an ambiguous timing
dispute that yields a CPO but no determinate blame, and a below-threshold
consistent lie that produces no contradiction at all.

\begin{table}[t]
\centering
\small
\setlength{\tabcolsep}{4pt}
\begin{tabular}{@{}p{2.0cm}p{1.0cm}p{1.7cm}p{1.35cm}@{}}
\toprule
\textbf{Scenario} & \textbf{CPO?} & \textbf{Blame} & \textbf{Slash?} \\
\midrule
Falsified shipping record & yes & issuer (equiv.) & yes \\
Cold-chain excursion (transporter's own readings conflict) & yes & transporter (self-equiv.) & yes \\
Counterfeit part (self-inconsistent records) & yes & supplier (self-equiv.) & yes \\
IoT sensor drift & no & --- & no \\
Coordinated delay & yes & ambiguous & if determinate \\
Consistent lie below threshold & no & --- & no \\
\bottomrule
\end{tabular}
\caption{Illustrative scenarios traced through detection, blame, and
slashing. Coordinated silent deception and consistent lying are the
designed-in blind spots.}
\label{tab:cases}
\end{table}

\section{Limitations and Discussion}
\label{sec:limits}

\paragraph{The oracle boundary, restated.}
The protocol detects \emph{provable contradictions}, which is strictly
weaker than verifying truth. A party that is internally consistent and lies
about the physical world---a supplier whose every record agrees with every
other yet all describe a fiction---emits no CPO. This is not an
implementation gap but the definition of the primitive. It is exactly why we
position the layer as complementary to oracle
aggregation~\cite{zhang2020deco,park2023acon2}: those govern ingestion and
can, in principle, catch consistent lies at the point of entry; ours governs
accountability once claims are on the record.

\paragraph{Comparison to oracle staking.}
Chainlink-style oracle networks aggregate independent reports and can slash
node operators for availability or performance faults, but they do not
natively detect \emph{causal contradictions across events} in a
domain-specific log. Our contribution is that downstream detection layer,
not a replacement for aggregation.

\paragraph{Residual limitations.}
Table~\ref{tab:limits} lists the main ones with mitigations. Blame ambiguity
means some real contradictions cannot be adjudicated on-chain; storage grows
without pruning; $p_{\min}$ is a modeling parameter whose real value depends
on watchtower participation; and legal enforceability of a slash sits outside
the protocol.

\begin{table}[t]
\centering
\small
\setlength{\tabcolsep}{4pt}
\begin{tabular}{@{}p{2.05cm}p{2.3cm}p{2.2cm}@{}}
\toprule
\textbf{Limitation} & \textbf{Impact} & \textbf{Mitigation} \\
\midrule
No ground truth & misses consistent lies & pair with oracles \\
Blame ambiguity & some CPOs undecidable & off-chain adjudication \\
Storage growth & $\sim$1\,GB/yr/party & checkpoint pruning \\
Pruning vs.\ detection & contradictions against pruned claims unprovable & retain per-subject digests past checkpoints \\
$p_{\min}$ calibration & affects coverage & watchtower incentives \\
Stake barrier & excludes small firms & stake pooling \\
Legal enforcement & slash $\neq$ judgment & governance overlay \\
\bottomrule
\end{tabular}
\caption{Limitations and mitigations. None is claimed to be eliminated; each
is bounded.}
\label{tab:limits}
\end{table}

\paragraph{Future directions.}
Natural extensions include combining the layer with an authenticated-feed
oracle to close the consistent-lie gap; learning-based prioritization of
which same-subject candidate pairs to check under a fixed sampling budget; cross-domain
interoperability across multiple EPCIS deployments; and a governance model
for adjudicating $\bot$-blame CPOs.

\section{Conclusion}
\label{sec:conclusion}

ECO/CPO-DAG formalizes contradiction detection as a supplemental
accountability layer for adversarial supply chains. Signed claims in a
causally ordered DAG, self-verifying contradiction proofs, determinate-blame
slashing, and commitment-based selective disclosure together let any observer
convert a \emph{provable} inconsistency into an economic penalty, without
consensus and without revealing more than the contradiction requires. We
have been deliberate about scope: the layer does not solve the oracle
problem, it cannot catch a consistent liar, and every quantitative statement
here is a model output rather than a measurement. Within those bounds, it
adds a concrete, cryptographically and economically grounded audit primitive
that complements existing ingestion-time defenses.

\bibliographystyle{plain}
\bibliography{refs}

\end{document}